\documentclass{llncs}
\usepackage{fullpage}
\usepackage{etoolbox}
\usepackage{epsfig}
\usepackage{framed}
\usepackage{color}
\usepackage{subfigure,soul}
\usepackage{graphicx}
\usepackage{comment}
\usepackage{ifthen}
\usepackage{amsmath,amssymb}
\usepackage{tikz}
\let\doendproof\endproof
\renewcommand\endproof{~\hfill\qed\doendproof}
\newcommand{\keywords}[1]{\par\addvspace\baselineskip
\noindent\keywordname\enspace\ignorespaces#1}

\newbool{fullversion}
\booltrue{fullversion}

\begin{document}

\title{Scheduling over Scenarios on Two Machines\thanks{This work was partially supported 
by EU-IRSES grant EUSACOU and Tinbergen Institute. EF was partially supported by projects PRH PICT 2009-119 
and UBACYT 20020090100149
. AvZ was partially supported by Suzann Wilson Matthews summer research award. }}
\author{Esteban Feuerstein\inst{1} \and Alberto Marchetti-Spaccamela\inst{2} \and Frans Schalekamp\inst{3} \and Ren\' e Sitters\inst{4,5} \and Suzanne van der Ster\inst{4} \and Leen Stougie\inst{4,5} \and Anke van Zuylen\inst{3}}
\institute{Departamento de Computaci\'on - FCEyN - UBA,
Buenos Aires, Argentina \\
\email{efeuerst@dc.uba.ar}
\and
Sapienza University of Rome, Italy\\
\email{alberto@dis.uniroma1.it}
\and
College of William and Mary, Department of Mathematics, Williamsburg VA 23185\\
\email{\{frans, anke\}@wm.edu}
\and
Vrije Universiteit Amsterdam, The Netherlands\\
\email{\{r.a.sitters, suzanne.vander.ster, l.stougie\}@vu.nl}
\and
CWI Amsterdam, The Netherlands\\
\email{\{r.a.sitters, stougie\}@cwi.nl}
}

\maketitle              

\begin{abstract}
We consider scheduling problems over scenarios where the goal is to find a single assignment of the jobs to the machines which performs well over all possible scenarios. Each scenario is a subset of jobs that must be executed in that scenario and all scenarios are given explicitly. The two objectives that we consider are minimizing the maximum makespan over all scenarios and minimizing the sum of the makespans of all scenarios. For both versions, we give several approximation algorithms and lower bounds on their approximability. With this research into optimization problems over scenarios, we have opened a new and rich field of interesting problems.
\keywords{
job scheduling, 
makespan minimization,
scenarios,
approximation}
\end{abstract}

\section{Introduction}
We consider optimization problems {\it over scenarios}
where the goal is to find a single solution that performs well for each scenario in a given set of scenarios.
In particular, we consider the scheduling problem where the objective function is the makespan:
we are given a set $J$ of jobs, each with a processing time, and a set of scenarios; each scenario is specified by a subset of jobs in $J$ that must be executed in that scenario. Our goal is to find an assignment of jobs to machines that \emph{is the same for all scenarios} and optimizes a function of the \emph{makespan}, i.e., the completion time of the last completed job, \emph{over all scenarios}. The two objectives that we consider are minimizing the \emph{maximum} makespan over all scenarios and minimizing the \emph{sum} of the makespan of all scenarios.
We note that when the input contains only a single scenario, both versions of the problem reduce to the usual makespan minimization problem.

As an example, suppose that $J$ contains three jobs, numbered $1$, $2$, and $3$, that must be executed on two machines; the  processing time of job $1$ is $2$ while the processing time of jobs $2$ and $3$ is  $1$. There are three scenarios $S_1=\{1,2,3 \}$ and $S_2=S_3=\{2,3\}$. Assigning job $1$ to the first machine and jobs $2$ and $3$ to the second machine minimizes the maximum makespan over all scenarios, while assigning jobs $1$ and $2$ to the first machine and job $3$ to the second one minimizes the sum of the makespans of all scenarios.

The more egalitarian objective function of minimizing the maximum makespan over all scenarios fits in the framework of \emph{robust} optimization, where usually not so much a finite set of scenarios is explicitly given, as in our problem, but ranges for values of input parameters (see \cite{BenTalNemirovski}). We will refer to this objective as the MinMax objective. The more \emph{utilitarian} objective function of minimizing the sum of the makespans of all scenarios fits in the framework of \emph{a priori} optimization, though a priori optimization has so far only been introduced as a problem where the scenarios are random objects and the objective is to minimize the expected objective value. In that sense, minimizing the sum of makespans could be seen as the a priori problem with a uniform discrete distribution over a finite set of fully specified scenarios.  In general, the deterministic problem of optimizing over a finite set of scenarios can be seen as an alternative to the stochastic a priori setting \cite{Jaillet88}, in case a limited number of likely scenarios exists. We refer to this objective as the MinSum objective.

In an indirect way, combinatorial optimization problems over scenarios with the MinSum objective have appeared as the first-stage problem in a boosted sampling approach to two-stage stochastic optimization problems \cite{GuptaEA11}. In \cite{GuptaEA11}, scenarios are defined within a so-called black box, meaning that they can only be learnt by sampling.
From the black box, a finite set of scenarios is sampled, giving rise to a deterministic optimization problem over the drawn set of scenarios, in which a single solution needs to be found, that minimizes the sum of the objective values for the drawn scenarios.
In this sense some results on combinatorial optimization problems over scenarios have appeared, like {\sc Vertex Cover}, {\sc Steiner Tree} and {\sc Uncapacitated Facility Location} \cite{GuptaEA11}.

Modeling optimization problems over a finite set of given scenarios yields a rich source of interesting new combinatorial optimization problems, which are in general harder than their single-scenario versions. Specifically, almost any single-scenario scheduling problem has an interesting multi-scenario variation. As mentioned before, in this paper we focus, as a first example, on minimizing the maximum makespan over all scenarios and minimizing the sum (or, equivalently, the average) of the makespan of all scenarios.

The specific setting of the scheduling problem over scenarios  appears in situations where jobs have to be performed by skilled machines (workers), and some investment is required to attain the skill for a particular job. In such situations, one should decide on an assignment of all possible jobs to the workers, such that the workers can train for the jobs assigned to them ahead of time. The problem then is to assign jobs (specializations) to machines (workers), so that the workload of a machine for any scenario of jobs, from a set of scenarios likely to occur, is minimized.
Examples of such a setting are assignment of clients to lawyers, 
households to power sources, compile-time assignment of computational tasks to processors.
In most of such situations, the robust version of the problem with the \mbox{MinMax} objective is rather plausible, especially in situations where a set of likely scenarios to hedge against can be specified upfront.

Another motivation, though a bit indirect, comes from distributed information retrieval: in a term-partitioned index, it is good to allocate to the same processor terms appearing frequently together in queries, so as to minimize the communication cost (to solve an intersection query between two terms that reside in different processors, one of the posting lists must be sent to the processor holding the other). But this goal must be complemented with that of balancing the load, as it is not viable to put all the terms in the same processor. Therefore, it is necessary to divide ``clusters'' of commonly co-occurring terms among the processors, trying to balance the load. Naturally, queries appear sequentially over time and are not known a priori. One could, as an approximation, optimize considering as input the more likely scenarios. The partition must indeed be done a priori, because lists must be assigned to processors a priori.

To the best of our knowledge, this problem has not been considered in the literature. An a priori version of scheduling with stochastic scenarios has been studied in \cite{Bouyahia2013,Bouyahia2010488}, albeit not from an approximation theory point of view, but merely presenting experimental results, and with the scheduling objective of minimizing the sum of completion times of all the jobs per scenario.

We now give a formal definition of the two problems we consider.
We restrict ourselves to the case of two machines.
We are given a set of jobs $J$ with for each job $j\in J$ a processing times $p_j$, and a set of $k$ scenarios ${\cal S}=\{S_1,S_2,\ldots,S_k\}$, where each scenario $S_i\in {\cal S}$ is a subset of $J$. In each scenario,
we are interested in minimizing the makespan, but we are restricted to finding a solution, i.e., an assignment of the jobs to the machines, that applies to every one of the scenarios. Clearly, a solution that is good for one scenario may be bad for another. This gives rise to specifying objectives that reflect the trade-off between the various scenarios. In this paper we define the following two versions of the problem.
\begin{itemize}
\item {\bf MM2}
Assign the jobs in $J$ to two machines in such a way that the maximum makespan over the given scenarios is minimized. In other words, if we denote the makespan of a subset $S\subseteq J$ of jobs by $p(S)= \sum_{j\in S}p_j$, we are looking for a partition $A,\bar A$ of $J$, that minimizes $\max_{i=1,2,\ldots,k} \max\{ p( A \cap S_i ), p( \bar A \cap S_i )\}$. \\
\item {\bf SM2}
Assign the jobs to the machines such that the sum of the makespans of the given scenarios is minimized.  Using the notation just introduced, we are seeking a partition $A,\bar A$ of $J$, that minimizes $\sum_{i=1}^k \max\{ p( A \cap S_i ), p( \bar A \cap S_i )\}$.
\end{itemize}
For both objective functions, the problems are NP-hard, since the single-scenario version is NP-hard. However, the single-scenario version is only weakly NP-hard for 2 machines and an FPTAS exists \cite{ausiello}, whereas the problems defined here are strongly NP-hard. We will give various approximability and inapproximability results for several different versions of the problem depending on restrictions of the input.
In particular, the special cases that we consider are the following:
\begin{enumerate}
\item
$p_j=1$ $\forall j\in J$, that is, the case where all processing times are unitary;
\item
$|S_i | \leq r$ $\forall S_i \in {\cal S}$, that is, the case where the number of jobs in each scenario is bounded by a constant;
\item
$k = |{\cal S}|$ is constant, that is, the case that the number of scenarios is a constant.
\end{enumerate}

In Section~\ref{sec:minmax}, we study the problem MM2; we show that the problem cannot be approximated to within a ratio of $2- \varepsilon$ already in the case where $p_j=1$   and a ratio of $3/2$ if  $|S_i| \le 3$ and $p_j=1$. On the positive side, we give a  polynomial-time algorithm for the version in which every scenario contains 2 jobs. If $k$, the number of scenarios, is constant then there exists a PTAS; for an arbitrary number of scenarios, a $O(\log ^2k)$ approximation ratio exists. The latter two results are a consequence of an observed direct relation to the so-called {\sc Vector Scheduling} problem (see Section~\ref{sec:minmax} for its definition) and  results of~\cite{chekuri2004}.

In Section~\ref{sec:minsum}, we study problem SM2. We prove inapproximability within $1.0196$ assuming P$\neq$NP, and within $1.0404$ under the \emph{Unique Games Conjecture}~\cite{khotUGC}.
On the positive side, we present a $3/2$-approximate randomized algorithm. For instances with scenarios of size at most 3, we use a reduction to {\sc Max Cut} to obtain a 1.12144-approximation algorithm. For scenarios of size at most $r$, we present a reduction to {\sc Weighted Max Not-All-Equal $r$-Sat} and use this to obtain better-than-3/2- approximations for problems where the scenario sizes are not larger than 4.

Some thoughts about related problems, and ideas for future research are contained in a concluding section.

\section{Minimizing Maximum Makespan}
\label{sec:minmax}

We obtain inapproximability of MM2 using a recent result \cite{AustrinEA13} on the hardness of {\sc Hypergraph Balancing}: given a hypergraph find a 2-coloring of the vertices such as to minimize over all hyperedges the discrepancy between the number of vertices of the two colors.
\begin{theorem}
It is NP-hard to approximate MM2 with unitary jobs within ratio $2-\epsilon$.
\end{theorem}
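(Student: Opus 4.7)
The plan is to give a gap-preserving reduction from a suitable variant of Hypergraph Balancing, as established in~\cite{AustrinEA13}, to MM2 with unit processing times. Given a $k$-uniform hypergraph $H=(V,E)$, I would build an MM2 instance whose jobs are the vertices $V$, each with processing time $1$, and whose scenarios are the hyperedges: for every $e\in E$, create one scenario $S_e=e$. Any partition $A,\bar A$ of $V$ corresponds to a 2-coloring of $V(H)$, and vice versa.

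The core observation is an exact identity between the two objectives. For a scenario $S_e$ of size $k$ with $|A\cap e|=a$, its makespan is $\max(a,\,k-a)=\tfrac{1}{2}\bigl(k+|2a-k|\bigr)=\tfrac{1}{2}(k+\mathrm{disc}_e(A))$, where $\mathrm{disc}_e(A)$ is the discrepancy of the coloring on $e$. Taking the maximum over scenarios gives
\[
\mathrm{OPT}_{\mathrm{MM2}}(H) \;=\; \tfrac{1}{2}\bigl(k \;+\; \mathrm{OPT}_{\mathrm{disc}}(H)\bigr).
\]
So a small/large discrepancy on $H$ translates directly into a small/large maximum makespan, and any gap in the Hypergraph Balancing problem translates into a comparable gap in MM2.

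Now I would invoke the hardness result from~\cite{AustrinEA13}: for every $\delta>0$ there is a $k=k(\delta)$ such that, given a $k$-uniform hypergraph, it is NP-hard to distinguish the case in which $H$ admits a 2-coloring with $\mathrm{OPT}_{\mathrm{disc}}\le \delta k$ (so every edge is nearly balanced) from the case in which every 2-coloring leaves some edge with discrepancy at least $(1-\delta)k$ (so the edge is nearly monochromatic). Substituting into the identity above, the two cases become MM2 instances with optimum at most $\tfrac{1+\delta}{2}k$ versus optimum at least $\tfrac{2-\delta}{2}k$, giving a multiplicative gap of $\tfrac{2-\delta}{1+\delta}$, which exceeds $2-\epsilon$ whenever $\delta$ is chosen sufficiently small compared to $\epsilon$. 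This rules out a polynomial-time $(2-\epsilon)$-approximation unless $\mathrm{P}=\mathrm{NP}$.

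The main obstacle is purely bookkeeping: checking that the precise hardness statement one takes from~\cite{AustrinEA13} on $k$-uniform hypergraph balancing really yields the perfect/near-monochromatic-style gap required, and tracking how $k$ depends on $\epsilon$. Once that is stated cleanly, the reduction itself is essentially trivial (jobs are vertices, scenarios are edges) and the gap preservation follows from the identity $\mathrm{makespan} = \tfrac{1}{2}(k+\mathrm{disc})$ on uniform hypergraphs.
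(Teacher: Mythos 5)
Your proposal matches the paper's intended argument: the paper gives no detailed proof, merely citing the Hypergraph Balancing hardness of~\cite{AustrinEA13}, and your reduction (jobs $=$ vertices, scenarios $=$ hyperedges, makespan $=\tfrac12(k+\mathrm{disc})$ on a $k$-uniform instance) together with the near-balanced vs.\ near-monochromatic gap from that reference is exactly the intended route to the $2-\epsilon$ bound. The only point to verify, as you note, is quoting the precise form of the statement in~\cite{AustrinEA13} (perfectly balanced completeness vs.\ every coloring leaving some almost-monochromatic edge), which indeed yields the required gap.
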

This is  remarkable since, trivially, any solution, for any job sizes, is $2$-approximate (since we consider the problem for two machines only). In
\ifboolexpr{bool{fullversion}}
{the appendix we prove Theorem 9 that gives   a $3/2$ hardness bound}
{the full version of this paper we prove a hardness bound of $3/2$} 
when $|S_i| \le 3$ and $p_j=1$. This result completes the hardness characterization.

We now show that, if the number of jobs per scenario is 2, then the problem is solvable in polynomial time.

\begin{theorem}
MM2 with $|S_i| =2$ for all $S_i\in {\cal S}$ can be solved in time $O(|{\cal S}| \log |{\cal S}|)$.
\end{theorem}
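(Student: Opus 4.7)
The plan is to translate MM2 with $|S_i|=2$ into a threshold-bipartiteness problem and resolve it with a union-find sweep. Model each job as a vertex and each 2-job scenario $S_i=\{a,b\}$ as an edge $e_i$ of weight $w_i := p_a+p_b$ in a graph $G$. The first observation is that, under any assignment of jobs to the two machines, the makespan of scenario $\{a,b\}$ equals $\max(p_a,p_b)$ when $a$ and $b$ end up on different machines and $p_a+p_b$ when they share a machine. Consequently, for a candidate bound $T$, some assignment achieves maximum scenario-makespan at most $T$ iff (i) every job $j \in \bigcup_i S_i$ satisfies $p_j\le T$, and (ii) the subgraph $G_{>T}:=\{e_i : w_i>T\}$ of forced-split edges is bipartite; any edge with $w_i>T$ must have its endpoints on different machines, whereas edges with $w_i\le T$ may go either way without violating the budget.

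Next I would exploit monotonicity: as $T$ decreases, $G_{>T}$ only gains edges, so the optimum $T^\ast$ is attained at a critical weight. Sort the scenarios so that $w_1\ge w_2\ge\cdots\ge w_k$ and let $i^\ast$ be the first index at which $\{e_1,\ldots,e_{i^\ast}\}$ is not bipartite (set $w_{i^\ast}:=0$ if no such index exists). Then $T^\ast = \max(M,\,w_{i^\ast})$ where $M:=\max_{j\in\bigcup_i S_i} p_j$. Indeed, for $T=w_{i^\ast}$ the required-split subgraph $G_{>T}$ is contained in $\{e_1,\ldots,e_{i^\ast-1}\}$ (strictly so when there are ties at weight $w_{i^\ast}$), hence bipartite; while for any $T<w_{i^\ast}$, edge $e_{i^\ast}$ enters $G_{>T}$ and reintroduces an odd cycle, killing feasibility.

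The algorithm then processes the edges in decreasing order of $w_i$, inserting them into a union-find structure augmented with a parity bit for each vertex relative to its component's root. An edge $\{a,b\}$ creates an odd cycle precisely when $a$ and $b$ already lie in the same component with equal parity; the first such edge (or the end of the list) pins down $i^\ast$, and the final parity labelling yields the corresponding 2-coloring of the jobs, which we augment by assigning each job not appearing in any scenario to either machine arbitrarily. Sorting costs $O(k\log k)$ and the union-find sweep costs $O(k\,\alpha(k))$, so the total running time is $O(|\mathcal{S}|\log|\mathcal{S}|)$ as claimed (note that the number of jobs appearing in any scenario is at most $2k$). The only delicate points I foresee are correctly handling weight ties in the characterization of $T^\ast$ and folding the independent lower bound $T\ge M$ into the final answer, but both are routine.
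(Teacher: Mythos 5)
Your proposal is correct and follows essentially the same route as the paper: build the graph with edge weights $p_a+p_b$, insert edges in decreasing weight order while maintaining a 2-coloring of components (your union-find-with-parity is just a different implementation of the paper's small-to-large recoloring), stop at the first edge closing an odd cycle, and take the optimum as the maximum of that edge's weight and the largest processing time. Your threshold/bipartiteness monotonicity argument is the same lower-bound reasoning as the paper's "lowest-weight edge of an odd cycle" observation, stated in different language, and your tie-handling and treatment of $M=p_{\max}$ are sound.
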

\begin{proof}
We create a graph with a vertex for each job and connect by an edge the jobs that appear together in a scenario. We define the weight of edge $(j,k)$ to be
$p_j + p_k$, i.e., the sum of the processing times of the jobs associated to the incident vertices. Note that a solution for the a priori scheduling problem is a partitioning of the job set, and can be associated with a coloring of the vertices in this graph problem with two color classes. The objective value is then equal to the maximum of the highest weight of any monochromatic edge and the largest processing time of any job.

In other words, we should find a 2-coloring of the vertices of this graph, such that the maximum weight of a monochromatic  edge is minimized. A lowest weight edge in any odd cycle gives a lower bound on the objective value.

Consider the following algorithm. Starting with all vertices being part of their own singleton component, and having color $1$, we grow components by inserting edges, and label the vertices with the component they belong to, and with a color that can assume two values; $1$ and $2$. A color inversion of a vertex changes the color of the vertex (i.e., if it is colored $1$, the color is changed to $2$, and vice versa). We consider the edges in order of descending weight. When considering the next edge, say $(j,k)$, the following 3 cases can occur.

\smallskip
\noindent {\em Case 1.} Vertices $j$ and $k$ have the same color, and are in the same component. We end the algorithm. An optimal partitioning of the job set is given by the two color classes, where jobs that have color $1$ (respectively, $2$) are assigned to machine $1$ (respectively, $2$) and the objective is equal to the weight of edge $(j,k)$.

\smallskip

\noindent {\em Case 2.} Vertices $j$ and $k$ have different colors. If the vertices are in different components, then we update the component label for all nodes of the smaller component (breaking ties arbitrarily), so that all vertices have the same label. We then proceed to the next edge.

\smallskip

\noindent {\em Case 3.} Vertices $j$ and $k$ have the same color, and are in different components. In this case we invert the color of all nodes in the smaller component  (breaking ties arbitrarily), and then proceed as in Case 2.

\medskip
By construction, two vertices of the same color in the same component are joined by an even-length path. Therefore, when the algorithm terminates in Case 1, we have found an odd cycle in the graph, of which this last edge has lowest weight. Note that the assignment of jobs with the same color to the same machine implies that the makespan of the scenario is bounded by the weight of the last considered edge.
Since the weight of any such edge is a lower bound on the objective value, we have found an optimal solution.  Its value is given by the maximum of the weight of a monochromatic edge and $p_{\max}=\max_{j\in J} p_j$.

The running time follows from the observation that any time we invert the color and/or update the label of a vertex, it ends up in a component of at least twice the size of the component it belonged to before. Hence, the label of a vertex can be updated at most $\log |J|$ times. The total time can thus be bounded by $|{\cal S}|\log |{\cal S}|$ time for sorting the edges by weight, plus $|J|\log|J|$ time for updating the vertex colors and labels. Finally, we may assume without loss of generality that each job appears in at least one scenario, so $|{\cal S}|\ge |J|/2$.
\end{proof}

Another sharp characterization w.r.t. the number of scenarios, is obtained for the case of a constant number of scenarios. For jobs with unit processing times, the problem can be solved exactly: given that the number of scenarios is constant, there is only a constant number of job types, where the type of a job is the set of scenarios it is in. Then, the number of jobs on machine 1 of each type can be guessed. There are only a polynomial number of choices;, an extension can also accommodate a constant number of machines in polynomial time.
We notice that this also solves SM2 under the same restrictions in polynomial time. 

\begin{theorem}
\label{thm:dp}
MM2 and SM2 having jobs with unitary processing times can be solved in polynomial time if the number of scenarios is constant.
\end{theorem}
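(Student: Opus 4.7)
The plan is to exploit the fact that, under unit processing times with a constant number $k$ of scenarios, identical jobs form only a constant number of equivalence classes. First I would define the \emph{type} of a job $j$ as $T(j):=\{i : j\in S_i\}\subseteq\{1,\ldots,k\}$. Since $k$ is constant, there are at most $2^k$ distinct types; let $\mathcal{T}$ be the set of types that actually appear, and for each $t\in\mathcal{T}$ let $n_t$ be the number of jobs of type $t$.

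The key observation is that, because processing times are unit, any two jobs of the same type are interchangeable: the load profile of an assignment is determined entirely by the vector $(a_t)_{t\in\mathcal{T}}$ where $a_t\in\{0,1,\ldots,n_t\}$ is the number of type-$t$ jobs placed on machine~1 (the remaining $n_t-a_t$ go to machine~2). Indeed, under this encoding the load of machine~1 in scenario $i$ is $L_1^i=\sum_{t:\,i\in t} a_t$ and the load of machine~2 is $L_2^i=\sum_{t:\,i\in t}(n_t-a_t)$, so the makespan of scenario $i$ equals $\max(L_1^i,L_2^i)$. Hence both the MM2 objective $\max_i \max(L_1^i,L_2^i)$ and the SM2 objective $\sum_i \max(L_1^i,L_2^i)$ are readily evaluated from the vector $(a_t)_{t}$.

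The algorithm then simply enumerates every admissible vector $(a_t)_{t\in\mathcal{T}}$ and returns the one minimizing the chosen objective. The number of such vectors is $\prod_{t\in\mathcal{T}}(n_t+1)\le (n+1)^{|\mathcal{T}|}\le (n+1)^{2^k}$, where $n=|J|$; since $k$ is constant, $2^k$ is constant and this bound is polynomial in $n$. Evaluating the $k$ scenario makespans from a given vector takes $O(k\cdot 2^k)$ time, so the total running time is polynomial.

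There is essentially no obstacle: the only point to verify carefully is the polynomial enumeration bound, which breaks the moment $k$ is allowed to grow. As noted in the text preceding the theorem, the scheme extends verbatim to any constant number $m$ of machines: one replaces each scalar $a_t$ by a tuple $(a_{t,1},\ldots,a_{t,m})$ with $\sum_\ell a_{t,\ell}=n_t$, which contributes at most $(n_t+1)^{m-1}$ choices per type, yielding a total of at most $(n+1)^{(m-1)\cdot 2^k}$ enumerations, still polynomial for constant $m$ and $k$.
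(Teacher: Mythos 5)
Your proposal is correct and is essentially the same argument the paper sketches: classify jobs by their type (the set of scenarios containing them), note there are at most $2^k$ types under unit processing times, and enumerate the polynomially many ways to split each type between the two machines, including the same extension to a constant number of machines. No gaps; your write-up just fills in the details the paper leaves implicit.
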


A similar idea, with guessing the optimal value and rounding, leads to a PTAS in the general case under a constant number of scenarios, but this is also implied by the following result.

We conclude this section by noticing that if we consider any number of machines, the problem of minimizing the maximum makespan reduces to the {\sc Vector Scheduling} problem, where each coordinate corresponds to a scenario.
\begin{definition}
In the {\sc Vector Scheduling} problem we are given a set $V$ of $n$ rational $d$-dimensional vectors $v_1, \ldots, v_n$ from $[0,\infty)^d$ and a number $m$. A valid solution is a partition of $V$ into $m$ sets $A_1, \ldots, A_m$. The objective is to minimize $\max_{1\leq i\leq m} ||\sum_{v_j\in A_i} v_j||_{\infty}$.
\end{definition}
This problem is a $d$-dimensional generalization of the makespan minimization problem, where each job is a $d$-dimensional vector and the machines are $d$-dimensional objects as well. In our setting, the dimension $d$ equals the number of scenarios $|{\cal S}|$. Each coordinate of job $j$ equals its processing time in the corresponding scenario (either 0 or $p_j$).  Results of Chekuri et al.~\cite{chekuri2004} on {\sc Vector Scheduling} can directly be translated into our setting.

\begin{theorem}[\cite{chekuri2004}] For the problem of minimizing the maximum makespan over scenarios $S_i\in \cal S$ on $m$ machines,
\begin{enumerate}
\item there exists a PTAS for the case that $k = |{\cal S}|$ is constant
\item there exists a polynomial-time $O(\log^2 k)$-approximation for  $k$ scenarios;
\item there exists no $c$-approximation algorithm for any $c>1$, when dealing with any number of scenarios.
\end{enumerate}
\end{theorem}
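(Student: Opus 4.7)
The plan is to translate all three claims via the reduction sketched just above the theorem statement, which gives a value-preserving, polynomial-time equivalence between our scenario scheduling problem on $m$ machines and \textsc{Vector Scheduling} with dimension $d=k$.

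First, I would make the reduction explicit. Given an instance with jobs $J$, processing times $p_j$, scenarios $\mathcal{S}=\{S_1,\ldots,S_k\}$, and $m$ machines, associate to each job $j\in J$ the vector $v_j\in[0,\infty)^k$ whose $i$-th coordinate is $p_j$ if $j\in S_i$ and $0$ otherwise. Any assignment of jobs to machines corresponds to a partition $A_1,\ldots,A_m$ of the vectors, and for every machine $a$ and scenario $i$, the $i$-th coordinate of $\sum_{v_j\in A_a} v_j$ equals $p(A_a\cap S_i)$, the load of machine $a$ in scenario $i$. Hence
\[
\max_{a,i}\, p(A_a\cap S_i) \;=\; \max_{1\le a\le m}\, \Bigl\|\sum_{v_j\in A_a} v_j\Bigr\|_{\infty},
\]
so the scenario objective and the \textsc{Vector Scheduling} objective agree on corresponding solutions. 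In particular, any $\rho$-approximation for \textsc{Vector Scheduling} yields a $\rho$-approximation for our problem.

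Second, I would invoke the three results of Chekuri et al.~\cite{chekuri2004} for \textsc{Vector Scheduling}: a PTAS when the dimension $d$ is constant, which with $d=k$ gives claim (1); an $O(\log^2 d)$-approximation for general $d$, which with $d=k$ gives claim (2); and the inapproximability statement for arbitrary $d$ and $m$, which yields (3) once we verify that the hard instances of \cite{chekuri2004} can be expressed as instances of our problem. For the latter, note that the instances used in their hardness reduction produce $0/1$-valued vectors, and such a vector $v_j$ is realized in our reduction by the unit-processing-time job $j$ together with the scenarios $S_i = \{\, j : (v_j)_i = 1\,\}$; thus the hardness transfers to scenario scheduling with arbitrary $k$ and $m$.

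The only non-mechanical step is this last verification that Chekuri et al.'s inapproximability instances lie in the image of our reduction, and this is where I would spend the most care; everything else is an immediate consequence of the fact that the reduction preserves objective values exactly and takes polynomial time.
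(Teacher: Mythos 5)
Your proposal matches the paper's argument: the paper also treats this theorem as an immediate translation of the Chekuri--Khanna results via the job-to-vector map with $d=k$ (coordinate $i$ of job $j$ equal to $p_j$ if $j\in S_i$ and $0$ otherwise), which preserves objective values exactly. Your extra check for claim (3) --- that the hard \textsc{Vector Scheduling} instances are $0/1$-valued and hence lie in the image of the reduction --- is a sensible point of care the paper leaves implicit, but it is the same approach.
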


\section{Minimizing Sum of Makespans}
\label{sec:minsum}

\newcommand{\half}{\ensuremath{\tfrac12}}
\newcommand{\oneandhalf}{\ensuremath{3/2}}

We now turn our attention to SM2, the problem of minimizing the  sum of the makespans over all scenarios, in the case of 2 machines.

We start this section by noting that SM2 is MAX\,SNP-hard even with unitary processing times and scenarios containing two jobs each.

\begin{theorem}
SM2 is NP-hard to approximate to within a factor of $1.0196$ and UGC-hard to approximation to within a
factor of $1.0404$, even if all jobs have length 1, and all scenarios contain two jobs.
\end{theorem}

The proof is through a reduction from \textsc{Max Cut}~\cite{GareyJohnson}, and the hardness of approximation results shown by H{\aa}stadt~\cite{Hastad01} and Khot et al.~\cite{khot}. The details are given in the 
\ifboolexpr{bool{fullversion}}
{appendix.}
{full version of this paper.}

In the remainder of this section, we will give approximation results for SM2.  As for MM2 in the previous section, we notice that also for this problem any solution is a trivial 2-approximation.
In the remainder of this section, we will first show that the algorithm that randomly assigns the jobs to the two machines independently with equal probability gives a $3/2$-approximation. We then show two deterministic approximation algorithms, which give good approximation guarantees if the number of jobs per scenario is small.

\subsection{A Randomized Approximation Algorithm}
\begin{lemma}\label{lemma:half}
Consider a scenario $S$, and let $A,\bar A$ be any partitioning of the jobs in $S$.
When assigning each job of $S$ to the two machines independently with equal probability, the expected load of the least loaded machine is at least \half$\min\{p(A),p(\bar A)\}$.
\end{lemma}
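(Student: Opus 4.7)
The plan is to rewrite the expected load of the least loaded machine, $E[\min(X,Y)]$, where $X$ and $Y$ are the random total loads on the two machines, via the identity $\min(X,Y) = \tfrac12(X+Y - |X-Y|)$. Since the sum $X+Y$ equals the deterministic quantity $p(S)$, the lemma becomes equivalent to the inequality $E|X-Y| \leq \max\{p(A), p(\bar A)\}$, because $p(S) - \max\{p(A),p(\bar A)\} = \min\{p(A),p(\bar A)\}$. This reduction is the natural first step: it turns a min into a difference and lets us forget about the scenario-level structure and focus on a single symmetric random variable.

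Next, I would decompose $X - Y$ according to the fixed partition: let $W_A := X_A - Y_A$ and $W_{\bar A} := X_{\bar A} - Y_{\bar A}$, so that $X - Y = W_A + W_{\bar A}$. Each $W$ is a signed sum $\sum_j p_j \varepsilon_j$ of independent Rademacher signs, so (i) $W_A$ and $W_{\bar A}$ are independent, (ii) each is symmetric about $0$, and (iii) $|W_A| \leq p(A)$ and $|W_{\bar A}| \leq p(\bar A)$ pointwise by the triangle inequality. These three ingredients — independence, symmetry, and deterministic sup bounds — are exactly what I will combine to bound $E|W_A + W_{\bar A}|$.

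The crux of the argument, and the step I expect to be the main obstacle if one does not see it, is a short symmetrization. Because $W_A$ is symmetric and independent of $W_{\bar A}$, the pairs $(W_A, W_{\bar A})$ and $(-W_A, W_{\bar A})$ are identically distributed, so $E|W_A + W_{\bar A}| = E|W_{\bar A} - W_A|$. Averaging these two and applying the elementary identity $|a+b| + |a-b| = 2\max\{|a|,|b|\}$ gives
\[
E|W_A + W_{\bar A}| \;=\; E\bigl[\max\{|W_A|,|W_{\bar A}|\}\bigr] \;\leq\; \max\{p(A), p(\bar A)\},
\]
using the pointwise bounds from the previous paragraph. Plugging back into the reduction from the first paragraph yields $E[\min(X,Y)] \geq \tfrac12 \min\{p(A), p(\bar A)\}$, as required. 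A fallback plan, should the symmetrization trick not be apparent, is to bound $E|w + W_{\bar A}|$ conditionally on $W_A = w$ by induction on $|\bar A|$, establishing $E|w + W_{\bar A}| \leq \max\{|w|, p(\bar A)\}$; this works but is considerably more laborious.
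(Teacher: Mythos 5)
Your proof is correct and is essentially the paper's argument in different clothing: the paper conditions on the induced splits $A',A''$ and $\bar A',\bar A''$ and averages over the two ways of pairing them, which is exactly your symmetrization $(W_A,W_{\bar A})\overset{d}{=}(-W_A,W_{\bar A})$, and its case analysis of the two minima is the identity $|a+b|+|a-b|=2\max\{|a|,|b|\}$ after rewriting $\min\{X,Y\}=\tfrac12\bigl(p(S)-|X-Y|\bigr)$. So the key idea coincides; your write-up is just a cleaner algebraic formulation of the same averaging step.
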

\begin{proof}
An assignment of jobs to the two machines induces a partition of $A$ into sets $A', A''$, and a partition of $\bar A$ into sets $\bar A', \bar A''$ where the jobs in the same set of the partition are assigned to the same machine.
The sets $A', A'', \bar A', \bar A''$ are not necessarily all non-empty.
We will prove the lemma by showing that, conditioned on the sets $A', A'', \bar A', \bar A''$, the machine load of the least loaded machine is at least $\tfrac12 \min\{p(A),p(\bar A)\}$, which implies that the statement also holds unconditionally.

Conditioned on the sets $A', A'', \bar A', \bar A''$, the least loaded machine has a load of
$\min\{p(A')+p(\bar A'), p(A'')+p(\bar A'')\}$ with probability $\tfrac12$ (namely, if $A', \bar A'$ are assigned to one machine, and $A'',\bar A''$ to the other machine), and $\min\{p(A')+p(\bar A''), p(A'')+p(\bar A')\}$  with probability $\tfrac12$ (namely, if $A', \bar A''$ are assigned to one machine, and $A'', \bar A'$ are assigned to the other machine).
Hence, conditioned on the partition of $A$ into $A', A''$ and of $\bar A$ into $\bar A', \bar A''$, the expected load of the least loaded machine is
$$\tfrac12\min\{p(A')+p(\bar A'), p(A'')+p(\bar A'')\} + \tfrac12\min\{p(A')+p(\bar A''), p(A'')+p(\bar A')\}.$$

Note that a simple case analysis shows that the sum of the two terms is either at least
$\tfrac12 \left(p(A')+p(A'')\right)=\tfrac12p(A)$ or at least $\tfrac12\left(p(\bar A')+p(\bar A'')\right)=\tfrac12p(\bar A)$.
So the load of the least loaded machine is at least $\tfrac12\min\{p(A),p(\bar A)\}$.
\end{proof}

\begin{theorem}
\label{coro:random_2_machines}
Randomly assigning each job to the two machines independently with equal probability is a \oneandhalf-approximation for SM2.
\end{theorem}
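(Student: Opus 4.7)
The plan is to analyze each scenario separately and apply Lemma~\ref{lemma:half} with $A,\bar A$ taken to be the optimal partition restricted to that scenario. Let $(A^*,\bar A^*)$ denote an optimal assignment for SM2, and for a scenario $S_i$ write $\mathrm{OPT}_i = \max\{p(A^*\cap S_i),\, p(\bar A^*\cap S_i)\}$ so that $\mathrm{OPT} = \sum_i \mathrm{OPT}_i$. The key observation is that for any scenario the makespan equals the total load minus the load of the least loaded machine: if $L_i$ denotes the (random) load of the least loaded machine in scenario $S_i$ under the random assignment, then the makespan for that scenario is $p(S_i) - L_i$.

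Next I would apply Lemma~\ref{lemma:half} with $A = A^* \cap S_i$ and $\bar A = \bar A^* \cap S_i$. The lemma gives
\[
E[L_i] \;\geq\; \tfrac12 \min\{p(A^*\cap S_i),\, p(\bar A^*\cap S_i)\} \;=\; \tfrac12 \bigl(p(S_i) - \mathrm{OPT}_i\bigr),
\]
so the expected makespan in scenario $S_i$ satisfies
\[
E[\,p(S_i)-L_i\,] \;\leq\; p(S_i) - \tfrac12\bigl(p(S_i)-\mathrm{OPT}_i\bigr) \;=\; \tfrac12 p(S_i) + \tfrac12\mathrm{OPT}_i.
\]

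To close the argument, I would use the elementary bound $p(S_i) \leq 2\,\mathrm{OPT}_i$, which holds because $\mathrm{OPT}_i$ is the larger of two nonnegative numbers summing to $p(S_i)$ (and hence is at least $p(S_i)/2$). Substituting gives $E[\,p(S_i)-L_i\,] \leq \tfrac32\mathrm{OPT}_i$, and summing over all scenarios by linearity of expectation yields $E[\text{total makespan}] \leq \tfrac32 \mathrm{OPT}$, as desired.

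There is no real obstacle: the lemma does the heavy lifting. The only subtle point is remembering that $L_i$ in Lemma~\ref{lemma:half} is measured against an \emph{arbitrary} partition $A,\bar A$ of the scenario, which allows us to plug in the restriction of the (unknown) optimal global partition to $S_i$ without needing to know it explicitly.
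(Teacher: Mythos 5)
Your proof is correct and takes essentially the same route as the paper: apply Lemma~\ref{lemma:half} scenario by scenario and combine the per-scenario bounds by linearity of expectation. The only difference is cosmetic — you plug the global optimal partition restricted to $S_i$ into the lemma, while the paper plugs in a makespan-optimal schedule \emph{for that scenario}, which gives the marginally stronger conclusion that the randomized assignment is within $\tfrac32$ of the sum of per-scenario optima (itself a lower bound on the SM2 optimum); in both cases the algebra reduces to $\max\{p(A),p(\bar A)\}+\tfrac12\min\{p(A),p(\bar A)\}\le\tfrac32\max\{p(A),p(\bar A)\}$.
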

\begin{proof}
Consider a scenario $S$, and let $A$ be the set of jobs processed on machine 1, and $\bar A=S\backslash A$ the set of jobs processed on machine 2 in a schedule with minimum makespan. Hence, the optimal makespan for $S$ is $\max\{p(A), p(\bar A)\}$.
By Lemma~\ref{lemma:half}, the load of the least loaded machine in scenario $S$, if the jobs are randomly assigned to the machines with equal probability, is at least $\tfrac12\min\{p(A), p(\bar A)\}$.
Hence, the load of the machine with the highest load is at most $p(A)+p(\bar A) - \tfrac12 \min\{p(A), p(\bar A)\} = \max\{p(A), p(\bar A)\} + \tfrac12 \min \{p(A), p(\bar A)\}\le \tfrac 32\max\{p(A), p(\bar A)\}$.

Hence, the expected makespan for scenario $S$ is at most $\tfrac32$ times the optimal makespan for scenario $S$, which implies that the sum over all scenarios of the expected makespans is at most $\tfrac32$ times the optimal summed makespan of all scenarios.
\end{proof}

We remark that the proof of the previous lemma bounds the objective value by comparing the load on a machine in a given scenario to the load for the optimal schedule {\it for that scenario}, rather than the optimal schedule for our problem.

It is easy to see that the analysis of the simple randomized algorithm is tight, by considering an instance of two jobs $\{1,2\}$ with unitary execution time and one scenario $S_1=\{1,2\}$. The optimal solution is to assign one job to each machine, whereas the randomized algorithm either assigns both jobs to the same machine with probability $\half$, or one job to each machine with probability $\half$.

\subsection{Deterministic Approximation Algorithms}\label{sec:nae_sat_reduction}
To obtain a deterministic approximation algorithm, we show that the SM2 problem can be reduced to the {\sc Weighted Max Not-All-Equal Satisfiability} problem, that we will abbreviate as {\sc Max-Nae Sat}.

\begin{definition}
In {\sc Max-Nae Sat}, a boolean expression is given, and a weight for each clause. A clause in the expression is satisfied if it contains both true and false literals. The problem is to find an assignment of true/false values to the variables, such as to maximize the total weight of the clauses satisfied. \end{definition}

Note that if $r$ is such that $|S_i|\leq r$ for all $S_i\in {\cal S}$, then by adding dummy jobs of processing time 0, we can assume that every scenario contains exactly the same number of jobs, i.e., $|S_i| =r$ for all $S_i\in {\cal S}$.
We will reduce the SM2 problem with scenarios of size at most $r$ to the {\sc Max-Nae Sat} problem with clauses of length $r$ ({\sc Max-Nae $r$-Sat}).

\begin{theorem}
A $(1-\gamma_r)$-approximation for {\sc Max-Nae $r$-Sat} implies a $(1+2^{r-2}\gamma_r)$-approximation for the SM2 problem with $|S| \le r$ for all scenarios $S \in {\cal S}$.
\end{theorem}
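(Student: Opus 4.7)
The plan is to reduce SM2 to \textsc{Max-Nae $r$-Sat}: for each job $j$ introduce a Boolean variable $x_j$ (with $x_j = \text{true}$ encoding ``$j$ is on machine $1$''), and after padding each scenario to size exactly $r$ with dummy $0$-length jobs, build, for every scenario $S$ and every unordered bi-partition $\{T, S\setminus T\}$ of $S$, one weighted NAE $r$-clause $\mathrm{NAE}\bigl(\{x_j : j \in T\}\cup\{\bar x_j : j\in S\setminus T\}\bigr)$. This clause is falsified exactly when $A\cap S \in \{T, S\setminus T\}$, i.e., when the global partition $(A,\bar A)$ induces precisely the bi-partition $\{T,S\setminus T\}$ on $S$. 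Assign it the nonnegative weight $w_T := C_S - \min\{p(T), p(S\setminus T)\}$, where $C_S := \max_T \min\{p(T), p(S\setminus T)\}$.

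For any partition $\pi = (A,\bar A)$, exactly one clause per scenario is falsified, so the total NAE weight equals $W - \sum_S w_{T^*_S(\pi)} = K + M(\pi)$, where $W := \sum_S\sum_T w_T$, $K := W - \sum_S C_S$ depends only on the instance, and $M(\pi) := \sum_S \min\{p(A\cap S), p(\bar A\cap S)\}$. Since the SM2 objective equals $P - M(\pi)$ with $P := \sum_S p(S)$ constant, maximizing the NAE objective is exactly equivalent to minimizing the SM2 objective. Writing $N^* = K + M^*$ for the NAE optimum, a $(1-\gamma_r)$-approximate NAE solution satisfies $M_{\mathrm{ALG}} \ge M^* - \gamma_r N^*$, so $SM2_{\mathrm{ALG}} \le SM2_{\mathrm{OPT}} + \gamma_r N^*$.

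It remains to show $N^* \le 2^{r-2}\,SM2_{\mathrm{OPT}}$. First, $M^* \le SM2_{\mathrm{OPT}}$, since on the partition attaining $M^*$ we have $\min \le \max$ per scenario. Hence it suffices to prove $K \le (2^{r-2}-1)\,SM2_{\mathrm{OPT}}$. I plan to obtain this from a per-scenario combinatorial inequality $(2^{r-1}-1)\,C_S - D_S \le (2^{r-2}-1)\bigl(p(S) - C_S\bigr)$, where $D_S := \sum_{\emptyset\subsetneq T\subsetneq S}\min\{p(T), p(S\setminus T)\}$ ranges over non-trivial unordered bi-partitions. Summing over $S$ and using $SM2_{\mathrm{OPT}} \ge \sum_S(p(S) - C_S)$ (no global partition can beat the scenario-individually optimal makespan) then gives the bound on $K$.

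The hard part is the per-scenario inequality. When $3C_S \le p(S)$ it is trivial. In the delicate regime $3C_S > p(S)$, one must show that the ability to balance $S$ almost perfectly forces many of the $2^{r-1}-1$ non-trivial bi-partitions to have near-balanced loads. Writing $a_T := |p(T) - p(S)/2|$, the inequality is equivalent to $\sum_{T\subseteq S} a_T \le 2^{r-2}\,p(S) + (3\cdot 2^{r-1}-4)\,\min_{T}a_T$, which I expect to verify by combining $a_T \le p(S)/2$, the Parseval-style identity $\sum_T a_T^2 = 2^{r-2}\sum_j p_j^2$, and a pairing argument over complementary subsets. The family $p=(1,1,\dots,1,r-1)$ of $r-1$ unit jobs together with a single job of processing time $r-1$ achieves equality for every $r\ge 2$, indicating that the factor $2^{r-2}$ is the best this reduction can give.
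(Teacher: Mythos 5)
Your reduction itself is sound and is essentially the paper's: one variable per job, one NAE clause of length $r$ per unordered bipartition of each (padded) scenario, exactly one clause per scenario falsified by any assignment, and the accounting $SM2_{\mathrm{ALG}} \le SM2_{\mathrm{OPT}} + \gamma_r N^*$ together with the target bound $N^* \le 2^{r-2}\,SM2_{\mathrm{OPT}}$ is correct arithmetic. The genuine gap is that the entire argument rests on the per-scenario inequality $(2^{r-1}-1)C_S - D_S \le (2^{r-2}-1)\bigl(p(S)-C_S\bigr)$, and you do not prove it: you only reformulate it in terms of $a_T=|p(T)-p(S)/2|$ and say you ``expect to verify'' it by combining $a_T\le p(S)/2$, a Parseval identity and a pairing argument. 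As submitted, the crucial step is missing, and the sketched route is speculative (the crude bound $a_T\le p(S)/2$ loses a factor $2$, and it is not shown how Parseval plus pairing recovers it).

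The good news is that the inequality is true and closes in two lines, using exactly the paper's Lemma~\ref{lemma:half}: $D_S/2^{r-1}$ is the expected load of the least loaded machine when the jobs of $S$ are assigned uniformly at random, so applying the lemma to a partition attaining $C_S$ gives $D_S \ge 2^{r-2}C_S$; combined with $C_S \le \tfrac12 p(S) \le p(S)-C_S$ this yields $(2^{r-1}-1)C_S - D_S \le (2^{r-2}-1)C_S \le (2^{r-2}-1)\bigl(p(S)-C_S\bigr)$, with no need for the $\min_T a_T$ term, the Parseval identity, or the extremal family. With that lemma inserted your proof is complete and runs parallel to the paper's; the only real difference is the choice of weights. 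You take the automatically nonnegative weights $C_S-\min\{p(T),p(S\setminus T)\}$ and absorb the resulting additive constant into the analysis, whereas the paper chooses weights so that the satisfied weight equals the least load exactly, which can be negative, and then shifts every clause of scenario $S$ by $K(S)=\tfrac14\tfrac{N-2}{N-1}p(S)$, bounding that shift by the same random-assignment lemma and finishing with $L\le 2L^*_{\max}$ --- the same slack you use via $C_S\le p(S)-C_S$ and $M^*\le SM2_{\mathrm{OPT}}$.
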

\begin{proof}
We start by formulating the SM2 problem as a {\sc Max-Nae Sat} problem. Each job $j$ corresponds to a variable $x_j$ in the {\sc Max-Nae Sat} instance.
An assignment of the variables in the {\sc Max-Nae Sat} instance corresponds to an assignment in SM2 as follows: machine 1 is assigned all jobs for which the corresponding variable is set to true, and machine 2 processes all jobs for which the corresponding variable is set to false.

We now construct a set of weighted clauses for each scenario such that the weight of the satisfied clauses for a given assigment is equal to the load of the least loaded machine in the scenario. Hence, maximizing the weight of the satisfied clauses will maximize the weight of the least loaded machine, and it will thus minimize the weight of the machine with the heaviest load, i.e., the makespan.

For a given scenario $S$ of SM2 with $r$ jobs, we construct $2^{r-1}$ clauses of length $r$ as follows.
For each partitioning of $S$ into two  sets $A$ and $\bar A$, we create a clause denoted by $C_S( \{A,\bar A\} )$. In clause $C_S( \{A,\bar A\} )$, all variables corresponding to jobs in one set appear negated, all variables corresponding to the other set appear non-negated. Note that $C_S(\{A,\bar A\})$ has the same truth table as $C_S(\{\bar A, A\})$ (namely, a clause is false if and only if all its literals are false, or all its literals are true).
Note that this means that if $A$ is assigned to the first machine and $\bar A$ is assigned to the second machine, then all clauses {\it except} $C_S( \{A, \bar A\} )$ are satisfied.

Denote by $w_S( \{ A, \bar A \} )$ the weight on the clause $C_S( \{ A, \bar A \} )$. To ensure the weight of the satisfied clauses is equal to the weight of the least loaded machine in SM2, we define weights on the clauses to be so that
\begin{equation*}
\sum_{{B, \bar B: \\ B \cup \bar B = S,  B \cap \bar B = \emptyset}} w_S( \{B, \bar B\} ) - w_S( \{ A, \bar A \} ) = \min\{ p( A ), p(\bar  A) \}.
\end{equation*}
Let $N = 2^{r-1}$, i.e., $N$ is the number of clauses corresponding to scenario $S$. The solution to this system of equations is to set
\[
w_S(\{A, \bar A\}) = \frac 1 {N-1} \sum_{{B, \bar B: \\ B\cup \bar B= S,  B \cap \bar B= \emptyset}}\min\{p(B), p(\bar B)\} - \min\{p(A), p(\bar A)\}.\]
The weights thus defined are not necessarily non-negative: consider a scenario $S$ that contains $r=4$ jobs of unit length. There are four ways of partitioning $S$ into one set of size one and one set of size three, and there are ${4 \choose 2} / 2 = 3$ ways of partitioning $S$ into two sets of size two.  Therefore $\sum_{\substack{B, \bar B}}
\min\{p(B), p(\bar B)\}= 10$, but that means that for a partitioning into sets $A, \bar A$ of size two $w_S(\{A, \bar A\}) = \tfrac 17 (10) - 2<0$.





To use approximation algorithms for {\sc Max-Nae Sat}, we need to make sure that all weights are non-negative. We accomplish this by adding a constant $K(S)$ to all weights of clauses corresponding to scenario $S$, where we set $-K(S)$ equal to a lower bound on the weights. We derive a lower bound on the weights by noting that (1) $\tfrac 1 {N}  \sum_{ B, \bar B }\min\{p(B), p(\bar B)\}$ is the expected value of the least loaded machine when all jobs are assigned to a machine with probability \half\ independently, hence, by Lemma~\ref{lemma:half}, its value is lower bounded by $\tfrac 12 \max_{B,\bar B} \min\{ p( B ), p( \bar B ) \}$; and (2) trivially,  $\max_{B, \bar B} \min\{ p(B), p(\bar B) \} \le \tfrac 12 p(S)$. Therefore
\begin{eqnarray*}
	w_S(\{A, \bar A\}) &=& 
	\tfrac 1 {N-1} \sum_{{B, \bar B: B \cup \bar B = S, B \cap \bar B = \emptyset} }\min\{p(B), p(\bar B)\} - \min\{p(A), p(\bar A)\}  \\
	&=& \tfrac {N}{N-1} \tfrac 1N \sum_{{B, \bar B: B \cup \bar B = S,  B \cap \bar B = \emptyset} }\min\{p(B), p(\bar B)\} - \min\{p(A), p(\bar A)\}  \\	
	&\geq& \tfrac {N}{N-1}\tfrac 12 \max_{B, \bar B} \min\{ p( B ), p(\bar B ) \} - \min\{p(A), p(\bar A)\}  \\
	&\geq& \tfrac {\tfrac 12 N - (N - 1) }{N-1 } \max_{B, \bar B} \min\{ p( B ), p( \bar B ) \} \\
	&=& -\tfrac 12 \tfrac {N- 2}{N-1 } \max_{B,\bar B} \min\{ p( B ), p( \bar B) \} \\
	&\geq& -\tfrac 14 \tfrac {N-2}{N-1} p(S).
\end{eqnarray*}
Thus, we set $K( S ) = \tfrac 14 \tfrac {N-2}{N-1} p( S )$, such that $\tilde w_S(\{A, \bar A\}) =  w_S(\{A, \bar A\}) + K( S ) \geq 0$ for all partitionings $A, \bar A$ of $S$ into two sets.





A solution to the {\sc Max-Nae Sat} instance is now mapped to a solution of SM2, by assigning the jobs for which the variable is set to true to machine 1, and scheduling the other jobs on machine 2.
We note that the $w$-weights of the clauses corresponding to scenario $S$ were chosen so that the sum of the weights of the clauses that are satisfied is exactly equal to the load on the least loaded machine in scenario $S$. Also,
$N-1$ clauses of scenario $S$ are satisfied in any solution to the {\sc Max-Nae Sat} instance. Therefore the total $\tilde w$-weight of the clauses for scenario $S$ that are satisfied in any {\sc Max-Nae Sat} solution is equal to the load on the least loaded machine in scenario $S$ plus an additional $(N-1) K(S)$.

We let $L = \sum_S p( S )$, and denote by $L^*_{\min}$ the sum over all scenarios of the load of the least loaded machine in an optimal solution, and by $L^*_{\max}$ the sum over all scenarios of the load of the most loaded machine in an optimal solution, so that $L^*_{\min} + L^*_{\max} = L$. Note that the additional term $K(S)$ in the $\tilde w$-weights of the {\sc Max-Nae Sat} solution causes an increase of the objective value with respect to the $w$-weights solution by adding an additional $\sum_S (N-1) K(S) = \sum_S \tfrac 14 (N-2) p( S ) = \tfrac 14 (N-2) L$ to each solution.

In particular, an optimal solution to the {\sc Max-Nae Sat} instance, has objective value $L^*_{\min} + \tfrac 14 (N-2) L$, and a $(1-\gamma)$-approximation algorithm for the {\sc Max-Nae Sat} instance, therefore, has objective value at least $(1-\gamma) ( L^*_{\min} + \tfrac 14 (N-2) L )$. Let us denote by $ALG(L_{\min})$ and $ALG(L_{\max})$ the sum over all scenarios of the least and most loaded machines in the corresponding job assignment. Note that
$ALG(L_{\min}) \geq (1-\gamma)\left(( L^*_{\min} + \tfrac 14 (N-2) L \right) - \tfrac 14 (N-2) L 
= (1-\gamma) L^*_{\min} - \tfrac 14\gamma (N-2) L$. 
Therefore,
\begin{align*}
ALG( L_{\max} ) = L - ALG(L_{\min}) &\leq L - ((1-\gamma) L^*_{\min} - \tfrac 14\gamma (N-2) L)\\
&= (1-\gamma) ( L - L^*_{\min} ) + \gamma L + \tfrac 14\gamma (N-2) L \\
&= (1-\gamma) L^*_{\max} + \tfrac 14\gamma (N+2) L.
\end{align*}
Noting that $L \leq 2L^*_{\max}$ gives  $ ALG( L_{\max} ) \leq (1-\gamma) L^*_{\max} + \tfrac 12\gamma (N+2) L^*_{\max} = ( 1 + \tfrac 12 \gamma N ) L^*_{\max}$
which proves the theorem, since $N=2^{r-1}$.
\end{proof}

For $r=3$, Zwick~\cite{Zwick99} gives a 0.90871-approximation for {\sc Max-Nae 3-Sat}. By the previous lemma, this gives a 1.18258-approximation for SM2 with scenarios of length at most three. For $r=4$, Karloff et al.~\cite{KarloffZ97} give a $\tfrac78$-approximation for {\sc Max-NAE 4-Sat}. By our lemma, this implies a $\tfrac32$-approximation for SM2 with scenarios of size 4. Note that this matches the guarantee we proved for the algorithm that randomly assigns each job to one of the two machines.
For general $r$, the best approximation factor known for {\sc Max-Nae Sat} is 0.74996 due to Zhang, et al.~\cite{ZhangYH04}, and the implied approximation guarantees for our problem are worse than the guarantee for the random assignment.

If every scenario has exactly two jobs, then we can obtain a better approximation guarantee by reducing SM2 to  {\sc Max Cut} as follows: we create a vertex for every job, and add an edge between $i$ and $j$ of weight $\min\{p_i, p_j\}$ for every scenario that contains jobs $i$ and $j$. For any cut, the weight of the edges crossing the cut is then exactly the sum over all scenarios of the load of the least loaded machine. Since the makespan for a scenario $S$ is $p(S)$ minus the load of the least loaded machine, maximizing the load of the least loaded machine, summed over all scenarios, is equivalent to minimizing the sum of the makespans.

If every scenario has at most three jobs, we can also reduce SM2  to {\sc Max Cut}, but the reduction, given in the 
\ifboolexpr{bool{fullversion}}
{appendix,}
{full version,}
is slightly more involved.

\begin{theorem}
\label{thm:UB_SM2_MAXCUT}
There exists a $(1+\gamma)$-approximation algorithm for the SM2 problem with scenarios containing at most three jobs, where $1-\gamma$ is equal to the approximation ratio for {\sc Max Cut}.
\end{theorem}

The 0.87856-approximation for {\sc Max Cut} of Goemans et al.~\cite{GoemansW95} gives us the following corollary.

\begin{corollary}
\label{coro:UB_SM2_MAXCUT}
There exists a $1.12144$-approximation algorithm for the SM2 problem with scenarios containing at most three jobs. \end{corollary}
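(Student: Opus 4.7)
The plan is to derive the corollary as an immediate consequence of Theorem~\ref{thm:UB_SM2_MAXCUT} by instantiating it with the Goemans--Williamson algorithm as a black-box subroutine for \textsc{Max Cut}. Theorem~\ref{thm:UB_SM2_MAXCUT} guarantees a $(1+\gamma)$-approximation for SM2 on instances whose scenarios contain at most three jobs, provided one has a $(1-\gamma)$-approximation for \textsc{Max Cut}. The Goemans--Williamson SDP-rounding algorithm~\cite{GoemansW95} achieves $1-\gamma = \alpha_{GW} \approx 0.87856$, so substituting gives $\gamma \le 0.12144$ and hence the stated $1.12144$-approximation.

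Carrying this out, I would proceed in three steps. First, given an SM2 instance with $|S_i|\le 3$ for all $S_i \in \mathcal{S}$, apply the reduction of Theorem~\ref{thm:UB_SM2_MAXCUT} to construct, in polynomial time, a weighted graph whose \textsc{Max Cut} value is in objective-preserving correspondence (up to the additive constant handled inside that theorem) with the sum over scenarios of the load on the least-loaded machine. Second, run the Goemans--Williamson algorithm on this graph to obtain a cut of weight at least $\alpha_{GW}$ times the optimum. Third, translate the cut back into a machine assignment by placing one side of the cut on machine~1 and the other on machine~2, following the mapping prescribed by the reduction.

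The only step that needs a sanity check rather than new work is that the edge weights produced by Theorem~\ref{thm:UB_SM2_MAXCUT} are nonnegative, which is the hypothesis under which the Goemans--Williamson guarantee holds: for the $2$-job scenarios the edge weights are $\min\{p_i,p_j\}\ge 0$ by construction, and the gadget for $3$-job scenarios in the proof of Theorem~\ref{thm:UB_SM2_MAXCUT} is designed so that its edge weights are nonnegative as well. Granting this, the composition runs in polynomial time and delivers a $1.12144$-approximation, which is the statement of the corollary. Since essentially all the combinatorial content is absorbed into Theorem~\ref{thm:UB_SM2_MAXCUT}, no further obstacle remains beyond plugging in the numerical value of $\alpha_{GW}$.
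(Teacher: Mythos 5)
Your proposal is correct and matches the paper's own (implicit) proof: the corollary is obtained exactly by instantiating Theorem~\ref{thm:UB_SM2_MAXCUT} with the Goemans--Williamson $0.87856$-approximation for {\sc Max Cut}, and the nonnegativity of the edge weights is indeed established inside the proof of that theorem. (One tiny slip: the {\sc Max Cut} reduction for three-job scenarios needs no additive constant shift --- that device belongs to the {\sc Max-Nae Sat} reduction --- but this does not affect your argument.)
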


\section{Epilogue}

This paper presents some first results on a basic scheduling problem under a set of scenarios.
The objective is to find a single solution that is applied to all the scenarios specified.
We studied this problem for scheduling with two different objectives: minimizing the maximum objective value over all scenarios, the MinMax version, and minimizing the sum of the objective values of all scenarios, the MinSum version.

To the best of our knowledge, combinatorial optimization problems under a set of fully explicitly specified scenarios has hardly been studied in the literature. Apart from posing theoretically interesting questions as we hope to have shown with this paper, it enhances our ability to model decisions problems where a learning aspect for performing jobs prohibits that job assignments can be adjusted on a day-by-day basis, but merely require a fixed assignment whose quality then necessarily differs over the various instances.

In relation to the MinMax version of the problem, we also like to mention a version of combinatorial optimization which has become known under the name universal optimization. E.g., \cite{EpsteinEA12} study a universal scheduling problem. In such a problem, the scenarios are not explicitly specified, but can be seen to be chosen by an adversary. The quality of an algorithm is then measured by comparing its solution to the optimal solution when the adversarial choices are known beforehand.

For future research, anyone can choose her or his favorite combinatorial optimization problem and study its multiple-scenario version.

We finish with the some questions emerging from our multiple-scenario scheduling problem. The result in \cite{AustrinEA13} suggests a 3/2-approximation for MM2 with 4 jobs per scenario and unitary jobs. Can this be extended to any job sizes? For the SM2 version the question is to close the gap between the 3/2-approximate randomized algorithm for the general case and the 1.0404 lower bound under the Unique Games Conjecture. It would also be interesting to find out if our randomized algorithm can be derandomized.

\newpage

\bibliographystyle{plain}
\bibliography{a_priori_scheduling}

\appendix
\section{Deferred Proofs}

\setcounter{theorem}{8}
\begin{theorem}
It is NP-hard to approximate MM2 with scenarios of size 3 to within a factor of $3/2$.
\end{theorem}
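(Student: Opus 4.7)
The plan is to reduce from the classical NP-hard problem of deciding whether a 3-uniform hypergraph is 2-colorable (equivalently, monotone NAE-3-SAT). Given a 3-uniform hypergraph $H=(V,E)$, I would construct an MM2 instance with a unit-processing-time job for each vertex $v \in V$ and, for each hyperedge $e \in E$, a scenario $S_e$ consisting of exactly the three jobs corresponding to the vertices of $e$. This clearly satisfies $|S_i|=3$ and $p_j=1$ for all $i,j$.

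The key observation is the following: since every scenario has exactly three unit jobs, the makespan of scenario $S_e$ under any partition $(A,\bar A)$ of the jobs equals $\max\{|A\cap S_e|,|\bar A\cap S_e|\}$, which takes the value $2$ whenever $S_e$ is split $2$-$1$ across the two machines and the value $3$ exactly when $S_e$ is monochromatic. Therefore, the maximum makespan over all scenarios is $2$ if and only if the 2-coloring of $V$ induced by the partition $(A,\bar A)$ properly 2-colors $H$ (no hyperedge is monochromatic), and is $3$ otherwise.

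Consequently, the optimal MM2 value is $2$ when $H$ is 2-colorable and at least $3$ when $H$ is not 2-colorable. Any algorithm achieving a strictly better than $3/2$ approximation could distinguish these two cases in polynomial time, which would decide 3-uniform hypergraph 2-colorability; this establishes the claimed $3/2$ hardness.

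The main step is really just the reduction itself, since hardness of 2-coloring 3-uniform hypergraphs is a standard result; there is no technical obstacle beyond invoking it. The proof is essentially the observation above that, with unit jobs and scenario size 3, MM2 collapses exactly onto hypergraph 2-coloring with a convenient $3$ versus $2$ gap.
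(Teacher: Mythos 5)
Your proof is correct and is essentially the paper's own argument: the paper reduces from \textsc{Set Splitting} restricted to sets of size 3, which is exactly the 2-colorability of 3-uniform hypergraphs (monotone NAE-3-SAT) you invoke, using the same construction (one unit job per element, one scenario per set) and the same makespan-2-versus-3 gap.
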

\begin{proof}
In~\cite{GareyJohnson}, it is shown that {\sc Set Splitting} is NP-complete, even if all sets have size 3. Consider such an instance of {\sc Set Splitting}. For each object in the {\sc Set Splitting} instance, we introduce a job with processing time 1. For each set in the input we introduce a scenario with the corresponding three jobs. Then, there is a partition of the jobs into two sets $A$ and $\bar A$ such that the objective value of MM2 is 2 if and only if there is a YES-answer to the {\sc Set Splitting} instance.

In particular, the NP-completeness with sets of size 3 implies a $3/2$ inapproximability result. The makespan for every scenario is either 2 or 3. Approximating this problem to within a factor better than $3/2$ implies being able to solve the decision problem {\sc Set Splitting}.
\end{proof}

\setcounter{theorem}{4}
\begin{theorem}
SM2 is NP-hard to approximate to within a factor of $1.0196$ and UGC-hard to approximation to within a
factor of $1.0404$, even if all jobs have length 1, and all scenarios contain two jobs.
\end{theorem}

\begin{proof}
We use a reduction from \textsc{Max Cut}~\cite{GareyJohnson}.
Given an (unweighted) max cut instance $G$, we create a job with processing time 1 for each vertex, and for each edge we create a scenario with two jobs that correspond to the vertices incident to that edge.
A cut in $G$ is induced by a partition of the vertices, which correspond to a partition $A, \bar A$ of the jobs. Scenario $(j,h)$ contributes 1 to the objective of the scheduling problem if and only if $|\{ j,h\} \cap A|=1 $, i.e., edge $(j,h)$ is in the cut, and it contributes 2 otherwise (because if the two jobs are assigned to the same machine, the makespan of that scenario is 2). Thus, the objective value of the scheduling problem is equal to twice the total number of edges in the graph minus the size of the cut.

Now, let $OPT(CUT)$ be the optimal {\sc Max Cut} objective, let $m$ be the number of edges in the instance, and assume we have an $(1+ \alpha)$-approximation algorithm for our problem, i.e., a solution with sum of makespans at most $(1+\alpha)(2m-OPT(CUT))$.
Hence, at least $(1+\alpha) OPT(CUT)-\alpha 2m$ scenarios have a makespan of 1, i.e., the size of the corresponding cut in $G$ is at least $(1+\alpha) OPT(CUT)-\alpha 2m$. Now, note that $OPT(CUT)\ge m/2$, and hence the size of the cut is at least $(1-3\alpha)OPT(CUT)$.

By the lower bound on the approximability of {\sc Max Cut} proved by H{\aa}stadt~\cite{Hastad01}, $1-3\alpha \ge 0.941176$ unless $P=NP$,  and by a result of Khot et al.~\cite{khot}, $1-3\alpha\ge 0.878567$ under the Unique Games Conjecture.
\end{proof}

\setcounter{theorem}{7}
\begin{theorem}
\label{thm:UB_SM2_MAXCUT}
There exists a $(1+\gamma)$-approximation algorithm for the SM2 problem with scenarios containing at most three jobs, where $1-\gamma$ is equal to the approximation ratio for {\sc Max Cut}.
\end{theorem}

\begin{proof}

We may assume by adding dummy jobs with processing time 0 that every scenario has three jobs.
We create a vertex for every job, and for a scenario containing jobs $i$, $j$ and $k$, we add edges $\{i,j\}$, $\{j,k\}$ and $\{k,i\}$. 
If multiple scenarios contain jobs $i$ and $j$, the corresponding edge will have the same multiplicity in the constructed graph.
Note that a cut will either have zero or two of the edges corresponding to a given scenario crossing the cut. We now set the weight of the edges in such a way that if two edges cross the cut, then the sum of the weights of the two edges is equal to the load of the least loaded machine.
In order to do this, we first define $b_i$ to be the load of the least loaded machine in the scenario, if $i$ is on one machine, and $j$ and $k$ are on the other machine, i.e., $b_i = \min\{p_i, p_j+p_k\}$. We similarly define $b_j=\min\{p_j, p_i+p_k\}$ and $b_k=\min\{p_k, p_i+p_j\}$.
Then we want to set the weights $w(e)$ such that
\begin{eqnarray*}
w(i,j)+w(i,k)&=& b_i;\\
w(i,j)+w(j,k)&=& b_j;\\
w(i,k)+w(j,k)&=& b_k.
\end{eqnarray*}
This is a system of three linearly independent equations with three unknowns, which has the (unique) solution
$w(e)=\tfrac12(b_i+b_j+b_j)-b_v$, where $e\in \{\{i,j\}, \{j,k\}, \{k,i\}\}$ and $v=\{i,j,k\}\backslash e$.
Note that for any cut, the contribution of the edges of a scenario to the weight of the cut is exactly equal to the load on the least loaded machine if we assign the jobs on one side of the cut to one machine and the jobs on the other side of the cut to the other machine.

We now show that the weights thus defined are non-negative. Substituting the expressions for $b_i,b_j,b_k$, we get that
\[w(i,j)=\tfrac12b_i+\tfrac12 b_j -\tfrac12 b_k = \tfrac12
\left(\min\{p_i, p_j+p_k\}+\min\{p_j,p_i+p_k\}-\min\{p_k,p_i+p_j\}\right).\]
Now, noting that either $\min\{p_i, p_j+p_k\}+\min\{p_j,p_i+p_k\} = p_i+p_j$, or $\min\{p_i, p_j+p_k\}+\min\{p_j,p_i+p_k\}\ge p_k$,
we get that $\min\{p_i, p_j+p_k\}+\min\{p_j,p_i+p_k\}\ge \min\{p_k,p_i+p_j\}$, and, thus, $w(i,j)\ge0$.


Let $L=\sum_{S\in {\cal S}} p(S)$, and for the optimal solution to SM2, let $L^*_{\min}$ be the sum over all scenarios of the load of the least loaded machine, and let $L^*_{\max}$ be the sum over all scenarios of the load of the most loaded machine, i.e., the sum of makespans. Then $L=L^*_{\min}+L^*_{\max}$. We denote by $ALG(L_{\min})$ and $ALG(L_{\max})$ the sums over all scenarios of the loads on the least loaded and most loaded machines defined by the cut.
A $(1-\gamma)$-approximation to {\sc Max Cut} gives us an assignment of jobs to machines such that 
$ALG(L_{\min}) \geq (1-\gamma)L^*_{\min}=(1-\gamma)(L-L^*_{\max})$. The sum of the makespans over all scenarios is
$ALG(L_{\max}) \leq L-(1-\gamma)(L-L^*_{\max})=\gamma L + (1-\gamma)L^*_{\max}$. Now, note that the makespan for any scenario is at least half of the sum of the processing times, and hence $L \leq 2 L^*_{\max}$. So, a $(1-\gamma)$-approximation for {\sc Max Cut} implies a $(1+\gamma)$-approximation for SM2 in the case where all scenarios have at most three jobs.
\end{proof}

\end{document}